\newcommand{\be}{\begin{equation}}
	\newcommand{\en}{\end{equation}}
\newcommand{\bea}{\begin{eqnarray}}
	\newcommand{\ena}{\end{eqnarray}}
\newcommand{\beano}{\begin{eqnarray*}}
	\newcommand{\enano}{\end{eqnarray*}}
\newcommand{\bee}{\begin{enumerate}}
	\newcommand{\ene}{\end{enumerate}}
\newcommand{\B}{{\mathfrak B}}
\newcommand{\mc}{\mathcal}
\newcommand{\D}{{\mc D}}
\newcommand{\Sc}{{\cal S}}
\newcommand{\F}{{\cal F}}
\newcommand{\Lc}{{\cal L}}
\newcommand{\1}{1 \!\! 1}
\newcommand{\ltwo}{{\cal L}^2(\mathbb{R})}
\newcommand{\Hil}{\mc H}
\newtheorem{thm}{Theorem}
\newtheorem{lemma}[thm]{Lemma}
\newtheorem{prop}[thm]{Proposition}
\newenvironment{proof}{\noindent {\bf Proof --}}{\hfill$\square$ \vspace{3mm}\endtrivlist}
\begin{document}
	
	\thispagestyle{empty}

	\vspace*{2cm}
	
	\begin{center}
		{\Large \bf Uncertainty relation for non-Hermitian operators}   \vspace{2cm}\\
		
		{\large F. Bagarello}\\
		Dipartimento di Ingegneria,\\[0pt]
		Universit\`{a} di Palermo, I - 90128 Palermo,\\
		and I.N.F.N., Sezione di Catania\\
		E-mail: fabio.bagarello@unipa.it\\

		\vspace{2mm}

	\end{center}
	
	\vspace*{2cm}
	
	\begin{abstract}
		\noindent 	In this paper we discuss some aspects of the Heisenberg uncertainty relation,  mostly from the point of view of non self-adjoint operators. Some equivalence results, and some refinements of the inequality, are deduced, and some relevant examples are discussed.
		
		We also begin a sort of {\em dynamical analysis} of the relation, in connection with what has been recently called $\gamma$-{dynamics} and $\gamma$-symmetries, and we discuss in some details the role of different scalar products in our analysis.
		
		The case of self-adjoint operators is recovered as a special case of our general settings. 		
	\end{abstract}
	
	\vspace{2cm}
	
	
	\vfill


	\newpage
	
	\section{Introduction}
	
	The Heisenberg uncertainty relation (HUR), \cite{heis,robe,schr}, has always been one of the most interesting results connected with Quantum Mechanics, especially when compared with Classical Mechanics. The reason is simple: more than its mathematical derivation, which is not hard, its physical  consequences were rather new, and unexpected. In fact, the HUR implies that it is just  impossible to perform an exact measure of two (non-commuting) observables, independently of the precision of the instruments used in the measuring process. This means, in philosophical terms, that it is just wrong to assume, or obtain, an arbitrary accuracy when measuring the {\em real world}. There exist limitations in our knowledge power which go beyond our ability to construct more and more refined tools for measuring physical quantities. This is something intrinsic with nature, and there is no way out.
	
	Hundreds of papers have been written on this subject from very many points of view. Mathematical derivations and possible extensions and various issues can be found here, \cite{hall,mess}, while in \cite{agarwal,pita,gibi} several microscopic applications have been discussed. Recently, the role of uncertainty in other macroscopic systems, and in particular in Decision Making, has also been considered, \cite{zhang,bag2018,bag2019}. In all these cases, the observables have been assumed to be Hermitian\footnote{In this paper Hermitian and slf-adjoint will be considered as synonimous.}.
	
	The case we want to address here is what happens when $A$ and $B$ are not Hermitian. This is interesting in view of the spreading interest in non-Hermitian Quantum Mechanics, starting {\em officially} in 1994 with \cite{Bender1998}, and then considered by several authors under different aspects, both from mathematicians and from physicists. Several authors have already considered this problem in recent years, \cite{pati}-\cite{bhabani}. In this paper we will continue this analysis from our own point of view: we will first deduce different forms of the HUR for non Hermitian operators and analyze the relation between them. Then we will restrict to the case of Hermitian operators, and we show with an explicit example how the standard HUR should be corrected to cover various situations, while in Section \ref{subsectII2} we will discuss the problem of saturating the HUR. Some comments of possible extensions of the HUR to more than two operators are given in Section \ref{subsectII3}. This will produce a bound on the quantities involved in the analysis. In Section \ref{sectIII} we describe the link between the HUR and what we have recently called {\em the $\gamma$-dynamics}, relevant when the dynamics of our physical system is driven by a non Hermitian Hamiltonian, \cite{bag2022}. The role of a different scalar product is discussed in Section \ref{subsectIII1}, while in Section \ref{subsectIII2} we consider a particular choice of scalar product in presence of pseudo-bosons, and we comment on the various definitions of the Heisenberg-like dynamics in presence of non self-adjoint Hamiltonians. Section \ref{sectconcl} contains our conclusions.

	\section{First results}\label{sect2}
	
	Let $\Hil$ be an Hilbert space with scalar product $\langle.,.\rangle$ and related norm $\|.\|=\sqrt{\langle.,.\rangle}$, and let $A$ and $B$ be two non Hermitian operators which, for simplicity, we will assume for the moment to be bounded: $A,B\in\B(\Hil)$, $A\neq A^\dagger$ and $B\neq B^\dagger$. Here $\B(\Hil)$ is the set of all bounded linear operators on $\Hil$. We recall that the adjoint of $X$, $X^\dagger$, is defined by $\langle X^\dagger f,g\rangle=\langle f,Xg\rangle$, $f,g\in\Hil$, for all $X\in\B(\Hil)$. All along the paper we will often comment on what happens if some of the operators involved in our analysis is not an element of $\B(\Hil)$, since this case is relevant in some concrete physical situation. We start, as usual, by defining, taken $X\in \B(\Hil)$ and $\varphi\in\Hil$, normalized, the following quantity:
	\be
	(\Delta X)^2=\|(X-\langle X\rangle)\varphi\|^2=\langle X^\dagger X\rangle-\langle X^\dagger\rangle\langle X\rangle,
	\label{21}\en
	where $\langle X\rangle=\langle\varphi,X\varphi\rangle$.  It is clear that $\Delta X$, $\langle X\rangle$, and the other mean values, all depend on $\varphi$. To simplify the notation, we will make this dependence explicit only when needed. Calling further $\hat X=X-\langle X\rangle$, which is also $\varphi$-dependent, we can write
	\be
	\Delta X=\|\hat X\varphi\|,
	\label{22}\en
	which is, as it was understood already at an early stage of the scientific interest for HUR,  an interesting way to look at the uncertainty of $X$. In fact, we can now simply use
	the Schwarz inequality as follows: $$\Delta A\, \Delta B=\|\hat A\varphi\|\|\hat B\varphi\|\geq\left|\langle\hat A\varphi,\hat B\varphi\rangle\right|=\left|\langle \hat A^\dagger\hat B\rangle\right|,$$
	which is a first version of the uncertainty relation for $(A,B,\varphi)$. With easy computations, we can rewrite the right-hand side of this formula in two alternative ways, since
		$$
	\langle \hat A^\dagger\hat B\rangle=\langle A^\dagger B\rangle-\langle A^\dagger\rangle\langle B\rangle=\frac{1}{2}\left(\langle[A^\dagger,B]\rangle+\langle\{\hat A^\dagger,\hat B\}\rangle\right),
	$$
	where $[X,Y]=XY-YX$ and $\{X,Y\}=XY+YX$ are, respectively, the commutator and the anticommutator between $X$ and $Y$, $X,Y\in\B(\Hil)$, \cite{schr}.
	Hence we have
	\be
	\Delta A\, \Delta B\geq\left|\langle \hat A^\dagger\hat B\rangle\right|=\left| \langle A^\dagger B\rangle-\langle A^\dagger\rangle\langle B\rangle\right|=\frac{1}{2}\left|\langle[A^\dagger,B]\rangle+\langle\{\hat A^\dagger,\hat B\}\rangle\right|.
	\label{23}\en
	
	Rather than (\ref{23}), we can find another  inequality which, as we will show later, is not equivalent to the one above. The procedure, which somehow extends the one often adopted in the existing literature, is the following:
	
	\vspace{2mm}
	
	we start defining a new operator $X_\alpha=\hat A-i\alpha \hat B$, $\alpha\in\mathbb{R}$. For all normalized $\varphi\in\Hil$, (or for all $\varphi\in D(X_\alpha)$, the domain of $X_\alpha$, if this operator is unbounded, see the Remark below), we have
	\be
	\|X_\alpha\varphi\|^2=\alpha^2 \|\hat B\varphi\|^2+\alpha C_{A,B;\varphi}+\|\hat A\varphi\|^2\geq0,
	\label{24}\en
	which must be true independently of the value of $\alpha$. Here
	\be
	C_{A,B;\varphi}=i\, \langle \hat B^\dagger \hat A- \hat A^\dagger \hat B\rangle=-2\Im\langle \hat B^\dagger \hat A\rangle,
	\label{25}\en
	which is real, clearly.
		The inequality in (\ref{24}) is always satisfied if the discriminant is non positive, i.e. if $4\|\hat A\varphi\|^2\|\hat B\varphi\|^2\geq C_{A,B;\varphi}^{\,2}$. Then we have
		\be
		\Delta A\, \Delta B\geq \frac{1}{2}|C_{A,B;\varphi}|=\left|\Im\langle \hat B^\dagger \hat A\rangle\right|.
	\label{26}\en

	\vspace{2mm}

	{\bf Remark:--} If $X_\alpha$ is unbounded, then the vector $\varphi$ cannot be any vector in $\Hil$. Indeed, in this case, either $A$ or $B$, or both, are unbounded. Hence we have to consider the domains of definition of these operators. In particular, we have to check if these operators are densely defined. We will always assume in this paper that this condition is satisfied. In particular, all along this paper we will work under one of the following hypotheses: the domains are dense in $\Hil$, or they coincide with  $\Hil$ itself. Because of our particular interest, this must also be true for products of operators. In particular, if $A$ and/or $B$ are not bounded, the commutator and the anticommutator should be defined properly. For instance, if we have a common domain of $AB$ and $BA$, $\D$, dense in $\Hil$, then taken any $f\in\D$ we can put $[A,B]f=ABf-BAf$. Alternatively, one could try to work with some algebras of unbounded operators $\Lc^\dagger(\D)$ which should replace the C$^*$-algebra $\B(\Hil)$ in this particular situation, \cite{aitbook,schu,bagrev}.

	\vspace{2mm}

	Let us now introduce a second linear combination of $\hat A$ and $\hat B$, $Y_\alpha=\hat A+\alpha \hat B$, again with $\alpha\in\mathbb{R}$. In this case (\ref{24}) is replaced by a different inequality, which must again be satisfied for all values of $\alpha$:
		\be
	\|Y_\alpha\varphi\|^2=\alpha^2 \|\hat B\varphi\|^2+\alpha D_{A,B;\varphi}+\|\hat A\varphi\|^2\geq0,
	\label{27}\en
	where 	\be
	D_{A,B;\varphi}=i\, \langle \hat B^\dagger \hat A+ \hat A^\dagger \hat B\rangle=2\Re\langle \hat B^\dagger \hat A\rangle,
	\label{28}\en
	which is also real. The positivity of $\|Y_\alpha\varphi\|^2$ in (\ref{27}) is always guaranteed if 
	\be
	\Delta A\, \Delta B\geq \frac{1}{2}|D_{A,B;\varphi}|=\left|\Re\langle \hat B^\dagger \hat A\rangle\right|.
	\label{29}\en
	Putting together (\ref{26}) and (\ref{29}) we have
	\be
	\Delta A\, \Delta B\geq \frac{1}{2}\max\left\{|C_{A,B;\varphi}|,|D_{A,B;\varphi}|\right\}=\max\left\{\left|\Re\langle \hat B^\dagger \hat A\rangle\right|,\left|\Im\langle \hat B^\dagger \hat A\rangle\right|\right\},
	\label{210}\en
	which should now be compared with (\ref{23}). First of all, it is easy to prove the following result:
	
	\begin{lemma}\label{lemma1}
		The inequality in (\ref{23}) implies that in (\ref{210}). Viceversa, if $\Im\langle \hat B^\dagger \hat A\rangle\cdot\Re\langle \hat B^\dagger \hat A\rangle=0$, the inequality in (\ref{210}) implies the one in (\ref{23}).
	\end{lemma}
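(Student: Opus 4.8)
The plan is to translate both inequalities into statements about a single complex number and then invoke the elementary geometry of the plane. First I would set $w=\langle\hat B^\dagger\hat A\rangle$ and record the conjugation identity $\langle\hat A^\dagger\hat B\rangle=\overline{w}$, which follows immediately from $\langle\hat A^\dagger\hat B\rangle=\langle\hat A\varphi,\hat B\varphi\rangle=\overline{\langle\hat B\varphi,\hat A\varphi\rangle}$ together with the scalar-product convention fixed above. With this identification the right-hand side of (\ref{23}) equals $|\langle\hat A^\dagger\hat B\rangle|=|w|$, the right-hand side of (\ref{210}) equals $\max\{|\Re w|,|\Im w|\}$, and --- the point that makes the comparison meaningful --- both inequalities carry the same left-hand side $\Delta A\,\Delta B$.

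The forward implication then rests on the single scalar inequality
\be
|w|=\sqrt{(\Re w)^2+(\Im w)^2}\geq\max\{|\Re w|,|\Im w|\},
\en
valid for every $w\in\mathbb{C}$. Because the bound $|w|$ in (\ref{23}) dominates the bound $\max\{|\Re w|,|\Im w|\}$ in (\ref{210}), any normalized $\varphi$ for which $\Delta A\,\Delta B\geq|w|$ automatically satisfies $\Delta A\,\Delta B\geq\max\{|\Re w|,|\Im w|\}$; this is exactly the claim that (\ref{23}) implies (\ref{210}).

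For the converse I would use the equality case of the same scalar inequality: one has $|w|=\max\{|\Re w|,|\Im w|\}$ precisely when $(\Re w)(\Im w)=0$, since squaring both sides turns the equality into $\max\{(\Re w)^2,(\Im w)^2\}=(\Re w)^2+(\Im w)^2$, which forces the smaller of $(\Re w)^2$ and $(\Im w)^2$ to vanish. Under the stated hypothesis $\Im\langle\hat B^\dagger\hat A\rangle\cdot\Re\langle\hat B^\dagger\hat A\rangle=0$ the two right-hand sides therefore coincide, the two inequalities become the identical statement, and in particular (\ref{210}) implies (\ref{23}).

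Since every step is an elementary manipulation of a complex modulus, I do not expect any genuine obstacle. The only points requiring care are the bookkeeping of conjugates --- making sure the modulus in (\ref{23}) is read off the same $w$ whose real and imaginary parts enter (\ref{210}) --- and the complementary observation that, off the degenerate locus $(\Re w)(\Im w)=0$, one has $|w|>\max\{|\Re w|,|\Im w|\}$ strictly, so that the two forms are genuinely inequivalent, with (\ref{23}) the stronger of the two. That last remark is presumably why the authors isolate (\ref{210}) as a distinct, weaker version of the uncertainty relation.
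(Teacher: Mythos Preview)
Your proof is correct and follows essentially the same approach as the paper: both reduce the comparison to the elementary identity $|\langle\hat A^\dagger\hat B\rangle|=\sqrt{(\Re\langle\hat B^\dagger\hat A\rangle)^2+(\Im\langle\hat B^\dagger\hat A\rangle)^2}$, from which the forward implication and the converse under the vanishing-product hypothesis are immediate. Your extra bookkeeping step of recording $\langle\hat A^\dagger\hat B\rangle=\overline{w}$ makes explicit what the paper leaves implicit in that identity.
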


\begin{proof}
		The (simple) proof is based on the obvious identity $\left|\langle \hat A^\dagger\hat B\rangle\right|=\sqrt{(\Re\langle \hat B^\dagger \hat A\rangle)^2+(\Im\langle \hat B^\dagger \hat A\rangle)^2}$. Hence
	$$
	\left|\langle \hat A^\dagger\hat B\rangle\right|\geq \max\left\{\left|\Re\langle \hat B^\dagger \hat A\rangle\right|,\left|\Im\langle \hat B^\dagger \hat A\rangle\right|\right\}.
	$$
	Then (\ref{23}) clearly implies (\ref{210}).
	
	On the other hand, if we have, for instance, $\Im\langle \hat B^\dagger \hat A\rangle=0$,  then  $\left|\langle \hat A^\dagger\hat B\rangle\right|=\left|\Re\langle \hat B^\dagger \hat A\rangle\right|$. Then formula (\ref{210}) implies that
	$$
	\Delta A\, \Delta B\geq \left|\Re\langle \hat B^\dagger \hat A\rangle\right|=\left|\langle \hat A^\dagger\hat B\rangle\right|,
	$$
	which is (\ref{23}). Similar computations can be repeated if  $\Re\langle \hat B^\dagger \hat A\rangle=0$.

\end{proof}

This lemma implies what follows: since the right-hand side of (\ref{23}) is larger than or equal to the right-hand side of (\ref{210}), then, from the point of view of the uncertainty relation, what is really relevant is clearly the inequality (\ref{23}). However, if $\Im\langle \hat B^\dagger \hat A\rangle\cdot\Re\langle \hat B^\dagger \hat A\rangle=0$, there is no difference between (\ref{23}) and (\ref{210}). But, if we are interested in conditions which saturate the inequalities, then (\ref{210}) looks  more relevant than (\ref{23}): if a vector saturates (\ref{210}), then it saturates (\ref{23}) a fortiori, while the opposite is, in general, false.

\vspace{2mm}

{\bf Remark:--} It is possibly slightly more elegant to rewrite the relevant inequalities above in an equivalent form, by making use of the  operator $P_\varphi=|\varphi\rangle\langle\varphi|$, in bra-ket notation, or, more explicitly, $P_\varphi f=\langle\varphi,f\rangle\varphi$, $\forall f\in\Hil$. If we then introduce $Q_\varphi=\1-P_\varphi$, it is easy to check that both $P_\varphi$ and $Q_\varphi$ are orthogonal projectors: $P_\varphi=P_\varphi^\dagger=P_\varphi^2$, and  $Q_\varphi=Q_\varphi^\dagger=Q_\varphi^2$. It is also easy to see that, for instance, $\hat A\varphi=Q_\varphi A\varphi$. Hence formula (\ref{23}) can be rewritten as
\be
\Delta A\, \Delta B\geq\left|\langle A^\dagger Q_\varphi B\rangle\right|,
\label{23bis}\en
while (\ref{210}) can be rewritten as
\be
\Delta A\, \Delta B\geq \max\left\{\left|\Re\langle  B^\dagger Q_\varphi A\rangle\right|,\left|\Im\langle B^\dagger Q_\varphi A\rangle\right|\right\}=\max\left\{\left|\Re\langle  A^\dagger Q_\varphi B\rangle\right|,\left|\Im\langle A^\dagger Q_\varphi B\rangle\right|\right\}.
\label{210bis}\en

\vspace{2mm}

It is interesting now to study more in details these inequalities, and investigate the relations between them, at least in special cases.	
	
\subsection{Hermitian operators: $A=A^\dagger$ and $B=B^\dagger$}

This is the most common case considered in the physical literature, and not only. Under this assumption on $A$ and $B$, $A=A^\dagger$ and $B=B^\dagger$, the operator $C=-i[A,B]$ is also Hermitian. In this case $C_{A,B;\varphi}$ and $D_{A,B;\varphi}$ can be rewritten as follows:
\be
C_{A,B;\varphi}=\langle C\rangle, \qquad D_{A,B;\varphi}=\langle \{A,B\}\rangle-2\langle A\rangle\langle B\rangle,
\label{211}\en
so that formula (\ref{210}) becomes
	\be
\Delta A\, \Delta B\geq \frac{1}{2}\max\left\{|\langle C\rangle|,\left|\langle \{A,B\}\rangle-2\langle A\rangle\langle B\rangle\right|\right\}.
\label{212}\en
It is clear that this is not the inequality usually met in the literature, which only refers to $|\langle C\rangle|$. It is also clear that (\ref{212}) reduces to $\Delta A\, \Delta B\geq \frac{1}{2}|\langle C\rangle|$ when $|\langle C\rangle|\geq \left|\langle \{A,B\}\rangle-2\langle A\rangle\langle B\rangle\right|$, but not in general. In other words, it is not difficult to find examples for which (\ref{212}) is satisfied, while
\be
|\langle C\rangle|=|\langle [A,B]\rangle|\geq \left|\langle \{A,B\}\rangle-2\langle A\rangle\langle B\rangle\right|
\label{213}\en
is not.
	
{\bf Example nr. 1--} A first counterexample can be easily constructed choosing $A=B=x_0$, the (self-adjoint) multiplication operator in $\ltwo$, and $\varphi=\Phi(z)$, the standard coherent state, normalized eigenstate of the bosonic annihilation operator $c$, $c\Phi(z)=z\Phi(z)$, \cite{klauder}-\cite{bagspringer}. Recalling that $x_0=\frac{c+c^\dagger}{\sqrt{2}}$ and $[c,c^\dagger]=\1$, it is an easy exercise to check that, while $\langle [A,B]\rangle=\langle \Phi(z), [x_0,x_0]\Phi(z)\rangle=0$, 
$$
\langle \{A,B\}\rangle-2\langle A\rangle\langle B\rangle=2\langle\Phi(z), x_0^2 \Phi(z)\}\rangle-2\langle\Phi(z), x_0\,\Phi(z)\rangle^2=1,
$$
	so that (\ref{213}) does not hold. However, since another easy computation shows that $\Delta x_0=\frac{1}{\sqrt{2}} $, the inequality (\ref{212}) holds as a strict equality: coherent states saturate this inequality. Incidentally we observe that, despite of the fact that $ x_0$ is unbounded, all the quantities here are well defined, since $\Phi(z)$ belongs to the domain  of $x_0$ and of all its powers: $\Phi(z)\in\D^\infty(x_0)$, for all complex $z$. Moreover, since the set of all coherent states is total in $\ltwo$ (i.e. the only $f(x)\in\ltwo$ which is orthogonal to all the $\Phi(z)$, $z\in\mathbb{C}$, is zero), the density assumption of our Remark after (\ref{26}) is satisfied.
	
We observe that, if we rather take $A=x_0$, $B=p_0$, the self-adjoint momentum operator, and again $\varphi=\Phi(z)$, $\langle \{A,B\}\rangle-2\langle A\rangle\langle B\rangle=0$, so that (\ref{212}) returns the very well known equality $
	\Delta x_0\, \Delta p_0= \frac{1}{2}|\langle [x_0, p_0]\rangle|=\frac{1}{2}$. In other words, in this case (\ref{212}) collapses to the standard uncertainty relation for the position and the momentum operators, in its saturated version.

	\vspace{2mm}
	
	In this counterexample it might appear a little bit disturbing the fact that we are using the same operator for $A$ and $B$ since, with this choice, the left-hand side of (\ref{213}) is always automatically zero. So one may wonder what happens when $[A,B]\neq0$. Of course, in our deductions above, there is no reason to impose any constraint of this kind on $A$ and $B$. Moreover, as the next example shows, there exist situations in which $[A,B]\neq0$ and again (\ref{212}) is satisfied, while (\ref{213}) is not.

	{\bf Example nr. 2--} Let now $A=c+c^\dagger=A^\dagger$, $B=c^\dagger\,c=B^\dagger$, and $\varphi=\Phi(z)$, as in Example nr. 1. In this case, since $[A,B]=c-c^\dagger\neq0$, we get $\langle [A,B]\rangle=\langle\Phi(z), (c-c^\dagger)\Phi(z)\rangle=z-\overline z=2\,i\,y$, where $z=x+iy$. Moreover, after a sort of normal ordering procedure, consequence of the commutation rule $[c,c^\dagger]=\1$, we find
	$$
	\langle \{A,B\}\rangle=\langle\Phi(z), ( c+c^\dagger+2c^\dagger\,c^2+2{c^\dagger}^2c)\Phi(z)\rangle=(1+2|z|^2)(z+\overline{z}),
	$$
	so that $\langle \{A,B\}\rangle-2\langle A\rangle\langle B\rangle=z+\overline{z}=2x$. Hence we conclude that the left-hand side of (\ref{213}) is proportional to the imaginary part of $z$ in $\Phi(z)$, $|\langle [A,B]\rangle|=2|y|$, while its right -hand side is proportional to the real part of $z$, $\left|\langle \{A,B\}\rangle-2\langle A\rangle\langle B\rangle\right|=2|x|$. This means that everything is possible:  $|\langle [A,B]\rangle|$ can be larger, smaller or equal to $\left|\langle \{A,B\}\rangle-2\langle A\rangle\langle B\rangle\right|$, depending on the $z$ we are considering in $\Phi(z)$. 
	
	Once again, however, (\ref{212}) is satisfied, independently of our choice of $z$. Indeed we have, using the eigenvalue equation $c\Phi(z)=z\Phi(z)$ and the commutation rule $[c,c^\dagger]=\1$, $\Delta A=1$ and $\Delta B=|z|$, so that the inequality (\ref{212}) becomes
	$$
	|z|\geq \max\{|x|,|y|\},
	$$
	which is always true, for all $z\in\mathbb{C}$, and is saturated if $x\cdot y=0$, i.e. if the real or the imaginary part of $z$ (or both) are zero. However, if both $x$ and $y$ are different from zero, the one above is a strict inequality. Hence this is an example of a pair of operators $A$ and $B$ whose uncertainty relation is not saturated by coherent states, in general, differently from what happens to $\hat x$ and $\hat p$. Also, this example shows that the standard version of the HUR, $\Delta A\, \Delta B\geq \frac{1}{2}|\langle [A,B]\rangle|$, is not optimal. This is what happens here if we use as $\varphi$ a coherent state with $y=0$. In this case, $\langle [A,B]\rangle=0$ so that the inequality $\Delta A\, \Delta B\geq \frac{1}{2}|\langle [A,B]\rangle|$ returns simply $\Delta A\, \Delta B\geq 0$, which is always trivially true. If we rather consider the other term, $\left|\langle \{A,B\}\rangle-2\langle A\rangle\langle B\rangle\right|$, in the computation, we deduce a non trivial bound, $\Delta A\, \Delta B\geq |x|$. Then the role of this term in (\ref{212}) appears quite relevant, since it can be useful in getting better lower bounds for the product of the variances of $A$ and $B$.

	\subsection{Saturating (\ref{23}) and (\ref{210})}\label{subsectII2}
	
	First we consider the inequality in (\ref{23}), and we look for conditions which returns a strict equality.  We can prove the following result, which is a minor extension of an analogous proposition given in \cite{hall} for self-adjoint operators.
	
	\begin{prop}\label{prop2}
		Given two operators $A, B\in\B(\Hil)$, not necessarily self-adjoint, and a normalized vector $\varphi\in\Hil$, we have
			\be
		\Delta A\, \Delta B=\left|\langle \hat A^\dagger\hat B\rangle\right|=\left| \langle A^\dagger B\rangle-\langle A^\dagger\rangle\langle B\rangle\right|=\frac{1}{2}\left|\langle[A^\dagger,B]\rangle+\langle\{\hat A^\dagger,\hat B\}\rangle\right|,
		\label{214}\en
		if one of the following condition is satisfied:
			(c1) $\varphi$ is an eigenstate of $A$; 	(c2) $\varphi$ is an eigenstate of $B$; 	(c3) $\varphi$ is an eigenstate of $S_\gamma=A+\gamma B$, for some $\gamma\in\mathbb{C}$.
		
		Viceversa, if one among (c1), (c2) or (c3) is satisfied, then (\ref{214}) is true.
		
	\end{prop}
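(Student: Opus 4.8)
The plan is to reduce the equality case of (\ref{23}) to the equality case of the Schwarz inequality, which was the only nontrivial step used in deriving (\ref{23}). Recall that (\ref{23}) came from
$$
\Delta A\,\Delta B=\|\hat A\varphi\|\,\|\hat B\varphi\|\geq\left|\langle\hat A\varphi,\hat B\varphi\rangle\right|=\left|\langle\hat A^\dagger\hat B\rangle\right|,
$$
and the Schwarz inequality $\|u\|\,\|v\|\geq|\langle u,v\rangle|$ is saturated precisely when $u$ and $v$ are linearly dependent, i.e.\ one is a scalar multiple of the other (allowing either to vanish). So I would set $u=\hat A\varphi$ and $v=\hat B\varphi$ and translate linear dependence of these two vectors back into conditions on $A$, $B$ and $\varphi$.

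For the forward implications, the two easy cases are (c1) and (c2): if $\varphi$ is an eigenstate of $A$ then $\hat A\varphi=(A-\langle A\rangle)\varphi=0$, and symmetrically an eigenstate of $B$ gives $\hat B\varphi=0$; in either case one factor in the Schwarz inequality vanishes and (\ref{214}) holds trivially. For (c3) the step is: starting from $S_\gamma\varphi=(A+\gamma B)\varphi=\mu\varphi$, taking the inner product with $\varphi$ gives $\mu=\langle A\rangle+\gamma\langle B\rangle$, whence
$$
\hat A\varphi+\gamma\hat B\varphi=0,
$$
so $u=-\gamma v$ and Schwarz is saturated.

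For the converse, I would assume (\ref{214}) holds, so that the Schwarz equality condition forces $\hat A\varphi$ and $\hat B\varphi$ to be linearly dependent, and then split into cases. If $\hat B\varphi=0$ we land directly on (c2). If $\hat B\varphi\neq0$, linear dependence yields $\hat A\varphi=\lambda\hat B\varphi$ for some $\lambda\in\mathbb C$; rearranging this as $(A-\lambda B)\varphi=(\langle A\rangle-\lambda\langle B\rangle)\varphi$ and setting $\gamma=-\lambda$ gives $(A+\gamma B)\varphi=(\langle A\rangle+\gamma\langle B\rangle)\varphi$, which is exactly (c3). Note that the subcase $\lambda=0$ recovers (c1), so (c1) is in fact the $\gamma=0$ instance of (c3).

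The one point requiring care is the bookkeeping of the degenerate vectors: the main obstacle is checking that the three listed conditions are genuinely exhaustive, in particular that the case $\hat B\varphi=0$ with $\hat A\varphi\neq0$ really does require (c2) and is not absorbed into (c3) — an eigenstate of $B$ that is not also an eigenstate of $A$ cannot be written as an eigenstate of any finite combination $A+\gamma B$, so (c2) cannot be dropped. Since the proposition assumes $A,B\in\B(\Hil)$, no domain subtleties intervene and every quantity appearing above is well defined.
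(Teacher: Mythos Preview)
Your argument is correct and is exactly the standard Schwarz-equality approach the paper has in mind: the paper does not write out a proof of Proposition \ref{prop2} but simply states that ``the proof is not particularly different from that in \cite{hall}, and will not be given here,'' and Hall's argument is precisely the reduction to the equality case of the Cauchy--Schwarz inequality that you carry out. Your treatment of the degenerate cases (in particular your observation that (c2) is genuinely needed and not subsumed by (c3)) is a welcome clarification that the paper leaves implicit.
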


The proof is not particularly different from that in \cite{hall}, and will not be given here. In this case we will write that $(A,B;\varphi)$ saturates (\ref{23}). We just want to comment on two particular aspects of this proposition:

first of all, it is easy to adapt the statement to the case of unbounded $A$ and $B$. One possibility is to assume that $\varphi$ belongs to the domain of all the relevant operators. In particular, we could restrict ourselves to those $\varphi\in D(A^\dagger)\cap D(B)\cap D(A^\dagger B)\cap D(BA^\dagger)$, which, of course, should be {\em large enough}. In fact, it is sufficient to require that  $\varphi\in D(A^\dagger)\cap D(B)$, replacing $\langle A^\dagger B\rangle$ with $\langle A\varphi, B\varphi\rangle$, and so on. Alternatively, as already observed before, we could assume that $\varphi\in\D$ and that $A,B\in\Lc^\dagger(\D)$. Other possibilities can also be considered, as for instance some stability condition of a suitable dense subset of $\Hil$, see \cite{bagspringer}.

The second comment on Proposition \ref{prop2} is that, with respect to what proved in \cite{hall}, here $\gamma$ in (c3) may have both a non zero real and imaginary parts, while in \cite{hall} it  was proved to be purely imaginary. This is connected to the fact that we are no longer assuming that $A$ and $B$ are self-adjoint.
	
In Lemma \ref{lemma1} we have seen that (\ref{23}) and (\ref{210}) are equivalent only when either $\Im\langle \hat B^\dagger \hat A\rangle$ or $\Re\langle \hat B^\dagger \hat A\rangle$, or both, are zero. Otherwise, they are different, and we only know that (\ref{23}) implies (\ref{210}), while the opposite implication is false. For this reason, it is interesting to check now what can be deduced if, rather than (\ref{214}), we require that
$$
\Delta A\, \Delta B=\left|\langle \hat A^\dagger\hat B\rangle\right|=\left| \langle A^\dagger B\rangle-\langle A^\dagger\rangle\langle B\rangle\right|=\frac{1}{2}\left|\langle[A^\dagger,B]\rangle+\langle\{\hat A^\dagger,\hat B\}\rangle\right|=
$$
\be
= \frac{1}{2}\max\left\{|C_{A,B;\varphi}|,|D_{A,B;\varphi}|\right\}=\max\left\{\left|\Re\langle \hat B^\dagger \hat A\rangle\right|,\left|\Im\langle \hat B^\dagger \hat A\rangle\right|\right\},
\label{215}\en
i.e., when  $\Im\langle \hat B^\dagger \hat A\rangle\cdot\Re\langle \hat B^\dagger \hat A\rangle=0$.

We start considering the simplest situation:
\be
\langle\hat A\varphi,\hat B\varphi\rangle=0.
\label{216}\en
	In this case, $\Im\langle \hat B^\dagger \hat A\rangle=\Re\langle \hat B^\dagger \hat A\rangle=0$, so that we are in the (strongest) conditions which imply equation (\ref{215}), according to Lemma \ref{lemma1}. Now, since $\Delta A\, \Delta B=\|\hat A\varphi\|\|\hat B\varphi\|=\left|\langle \hat A^\dagger\hat B\rangle\right|=0$, then either $\hat A\varphi=0$, or $\hat B\varphi=0$, or both. This means that $\varphi$ is an eigenstate of $A$, with eigenvalue $\langle A\rangle$, or that $\varphi$ is an eigenstate of $B$, with eigenvalue $\langle B\rangle$, or both. These results are clearly in agreement with Proposition \ref{prop2}. 
	
	Let us now rather assume that 
	
	\be
	\langle \hat A\varphi,\hat B\varphi\rangle\neq0.
	\label{217}\en
	This clearly implies that both $\hat A\varphi\neq0$ and $\hat B\varphi\neq0$. If
	\be
\Re \langle \hat A\varphi,\hat B\varphi\rangle=D_{A,B;\varphi}=0,
	\label{218}\en
	then $\left|\langle \hat A^\dagger\hat B\rangle\right|=\left|\Im\langle \hat A^\dagger\hat B\rangle\right|$, which cannot be zero, of course. The fact that $\hat A\varphi\neq0$ and $\hat B\varphi\neq0$ implies that we are in condition (c3) of Proposition \ref{prop2}, so that $S_\gamma\varphi=s_\gamma\varphi$, with $s_\gamma=\langle  A\rangle+\gamma \langle B\rangle$, so that
$
\hat A\varphi=-\gamma\,\hat B\varphi$. Inserting this equation in (\ref{218}) we find that $\Re (-\gamma \|\hat B\varphi\|^2)=0$, which implies that $\gamma=i\gamma_i$, $\gamma_i\in\mathbb{R}\setminus\{0\}$, necessarily. Hence we have $
\hat A\varphi=-i\gamma_i\,\hat B\varphi$.

If we rather have
\be
\Im \langle \hat A\varphi,\hat B\varphi\rangle=D_{A,B;\varphi}=0,
\label{219}\en
then $\left|\langle \hat A^\dagger\hat B\rangle\right|=\left|\Re\langle \hat A^\dagger\hat B\rangle\right|$, which cannot be zero. In this case, repeating the same computations, we find that $
\hat A\varphi=-\gamma_r\,\hat B\varphi$, where $\gamma_r\in\mathbb{R}\setminus\{0\}$.

\vspace{2mm}
	
No particular difference arises when $A=A^\dagger$ and $B=B^\dagger$, in particular when (\ref{216}) holds. If we rather have (\ref{217}), we can observe that, if in particular (\ref{218}) is satisfied, then  $\left|\langle \hat A\hat B\rangle\right|=\left|\Im\langle \hat A\hat B\rangle\right|=\frac{1}{2}\left|\langle[A,B]\rangle\right|$, so that $\Delta A\, \Delta B=\frac{1}{2}\left|\langle[A,B]\rangle\right|$, which is the well known result for the HUR. 

If we have, again, (\ref{217}), but if (\ref{219}) is satisfied rather than (\ref{218}), then we get
$$
\Delta A\, \Delta B=\frac{1}{2}\left|\langle\{A,B\}\rangle-2\langle A\rangle\,\langle B\rangle\right|,
$$
	which has to replace the previous one in this case. Once again, we see some differences with the standard results.

	\subsection{Three operators}\label{subsectII3}
	
	It might be interesting to briefly comment on what happens, and which kind of bound on the uncertainties have to be expected, when more than two operators are considered. In other words, which is the counterpart of formula (\ref{23}) when three operators $A$, $B$ and $C$, possibly non self-adjoint, are involved? We refer to \cite{spiros} for some result in this direction, but only for self-adjoint observables which are canonically conjugate, and to \cite{qin}, again for self-adjoint observables.
	
	 As shown in (\ref{22}), the variance of each operator $X$ on $\varphi$ can be expressed as the norm of the related vector $\hat X\varphi$. For this reason, the role of the Schwarz inequality is relevant in deducing (\ref{23}), and to discuss when saturation occurs. But the Schwarz inequality only involves two vectors. To extend the inequality to, say, three vectors, $f_1$, $f_2$ and $f_3$, not necessarily normalized, one possibility is to observe that the matrix
	$$
	F_3=\left(
	\begin{array}{ccc}
		\langle f_1,f_1\rangle & \langle f_1,f_2\rangle & \langle f_1,f_3\rangle \\
		\langle f_2,f_1\rangle & \langle f_2,f_2\rangle & \langle f_2,f_3\rangle \\
				\langle f_3,f_1\rangle & \langle f_3,f_2\rangle & \langle f_3,f_3\rangle \\
	\end{array}
	\right)
	$$
	is positive: $c^\dagger F_3 c\geq0$ for all $c^T=(c_1,c_2,c_3)\in\mathbb{C}^3$. Here $c^T$ is the transpose of the column vector $c$. Hence, using the Sylvester criterion, we must have
	$$
	\langle f_1,f_1\rangle\geq0, \qquad 	\langle f_1,f_1\rangle	\langle f_2,f_2\rangle-|	\langle f_1,f_2\rangle|^2\geq0,
	$$
	and
	$$
	\|f_1\|^2\|f_2\|^2\|f_3\|^3+2\Re\left(\langle f_1,f_2\rangle\langle f_2,f_3\rangle\langle f_3,f_1\rangle\right)\geq $$
$$ \geq \|f_1\|^2|\langle f_2,f_3\rangle|^2+\|f_2\|^2|\langle f_1,f_3\rangle|^2+\|f_3\|^2|\langle f_1,f_2\rangle|^2.
$$
	The first inequality is trivially true. The second is exactly the  Schwarz inequality. The third gives a condition among the norms and the scalar product of the three vectors. Hence this is the one which extends (\ref{23}) to the case of three operators. In doing so, we get the following inequality:
	$$
	(\Delta A)^2(\Delta B)^2(\Delta C)^2+2\Re\left(\langle \hat A^\dagger \hat B\rangle\langle \hat B^\dagger \hat C\rangle\langle \hat C^\dagger \hat A\rangle\right)\geq
	$$
	\be
		(\Delta A)^2\langle \hat B^\dagger \hat C\rangle+(\Delta B)^2\langle \hat A^\dagger \hat C\rangle+(\Delta C)^2\langle \hat A^\dagger \hat B\rangle,
	\label{220}\en
		which
 produces a bound on the quantities involved, but is not as easy to use as the inequality in (\ref{23}), of course. We also refer to \cite{qin} for a very similar result for self-adjoint operators.  
 
 \vspace{2mm}
 
 {\bf Remark:--} It might be interesting to observe that it is not hard to find a different inequality involving the variances of $A$, $B$ and $C$, which is just a simple consequence of the Schwarz inequality, used three times: 
$$
|\langle f_1,f_1\rangle\langle f_2,f_3\rangle\langle f_3,f_1\rangle|\leq \|f_1\|^2 \|f_2\|^2 \|f_3\|^2,
$$ 
	which in turns produces
	\be
	(\Delta A)^2(\Delta B)^2(\Delta C)^2\geq \left|\langle \hat A^\dagger \hat B\rangle\langle \hat B^\dagger \hat C\rangle\langle \hat C^\dagger \hat A\rangle\right|.
	\label{221}\en
	We recall that, using the equalities in (\ref{23}), this formula can be written in different ways. In particular, if $A$, $B$ and $C$ are self-adjoint and if, say, $[A,B]=0$, then (\ref{221}) returns the obvious inequality $\Delta A\Delta B\Delta C\geq0$. However we can still find some relevant bound on the uncertainties: if we also have, for instance, $[A,C]\neq0$, then $\Delta A\Delta C\geq \frac{1}{2}|\langle[A,C]\rangle|$.

	\section{Uncertainty relation and $\gamma$ dynamics}\label{sectIII}

	Once we have an uncertainty relation, and we have conditions which guarantee that this is saturated, it is also interesting to check if or when this {\em saturation} is preserved during time evolution. For instance, standard coherent states are stable under time evolution for the quantum harmonic oscillator, and it is well known that they saturate the HUR for $\hat x$ and $\hat p$, which are both self-adjoint. Hence this saturation is preserved during time evolution. However, in the situation which is more interesting for us, the operators we considered are not necessarily required to be self-adjoint. In particular, we are interested in discussing what happens if the Hamiltonian $H$ of the physical system $\Sc$ is not self-adjoint, i.e., if $H\neq H^\dagger$. This is an interesting situation, mainly in connection with $PT$- or pseudo-hermitian quantum mechanics, see \cite{mosta}-\cite{bender} and references therein.
	
	At a first view, the problem of the time evolution for these systems does not look particularly different from the standard case, at least when using the Schr\"odinger evolution: the equation to be solved is the usual one, $i\dot\psi(t)=H\psi(t)$. However, here, $H\neq H^\dagger$. This implies that, even in the simplest case in which all the operators involved in the description of $\Sc$ are bounded, the Heisenberg dynamics $\gamma^t(X)=e^{iH^\dagger t}Xe^{-iHt}$  is no longer an automorphism of $\B(\Hil)$: $\gamma^t(XY)\neq\gamma^t(X)\gamma^t(Y)$, for generic $X,Y\in\B(\Hil)$. We refer to \cite{bag2022} for several introductory results on what we have called $\gamma$-{\em dynamics}. Here we list only few facts which are relevant for us.
	
	\begin{enumerate}
		\item As already seen, for reasons which are clarified in \cite{bag2022} and in references therein, we introduce $\gamma^t(X)=e^{iH^\dagger t}Xe^{-iHt}$. Here we only observe that this is the analogous of the Heisenberg dynamics $\alpha^t(X)=e^{iH_0t}Xe^{-iH_0t}$, considered when $H_0=H_0^\dagger$. We stress once more that, here and in the following, to simplify the treatment  we will assume that all the relevant operators considered ($X$, $H$, $H_0$,...) are bounded.
		\item  We define a map $\delta_\gamma:\B(\Hil)\rightarrow\B(\Hil)$ as follows:
		\be
		\delta_\gamma(X)=\|.\|-\lim_{t,0}\frac{\gamma^t(X)-X}{t}=i\left(H^\dagger X-XH\right),
		\label{32}\en
		$X\in\B(\Hil)$. This is called a $\gamma$-derivation.
		\item The series $\sum_{k=0}^{\infty}\frac{t^k\delta_\gamma^k(X)}{k!}$ is norm convergent to $\gamma^t(X)$, for all $X\in\B(\Hil)$. Here we have used the following notation:  $\delta_\gamma^0(X)=X$, and $\delta_\gamma^k(X)=\delta_\gamma(\delta_\gamma^{k-1}(X))$, $k\geq1$.
		\item The following statements are equivalent: 1) $\delta_\gamma$ is a *-derivation, \cite{br1,br2}; 2) $\delta_\gamma(\1)=0$; 3) $H=H^\dagger$; 4) $\gamma^t(\1)=\1$; 5)  $\gamma^t(XY)=\gamma^t(X)\gamma^t(Y)$, $\forall X,Y\in\B(\Hil)$.
		
		This result implies that $\gamma^t$ cannot be an authomorphism of $\B(\Hil)$ if any of the above properties (and therefore all) is violated. We refer to \cite{bag2022} for the (serious) consequences of this lack of the authomorphism property for $\gamma^t$ when discussing the dynamics of $\Sc$.
		\item We introduce the operator $S^{-1}=\sum_{k=0}^N|\varphi_k\left>\right<\varphi_k|$, where each $\varphi_k$ is an eigenstate of $H$, with eigenvalue $E_k$. For simplicity we are  restricting here to finite sums\footnote{Or, more in general, to norm-converging series.}. In other words, we are assuming that $\dim(\Hil)=N$. If each eigenvalue of $H$ is non degenerate, then $\F_\varphi=\{\varphi_k,\,k=1,2,\ldots,N\}$
		 is a basis for $\Hil$, and admits an unique biorthonormal basis $\F_\psi=\{\psi_k,\,k=1,2,\ldots,N\}$, whose vectors are eigenstates of $H^\dagger$ with eigenvalues $\overline E_k$: $H^\dagger\psi_k=\overline{E_k}\psi_k=E_k\psi_k$, if $E_k$ is real. Then $S=\sum_{k=0}^N|\psi_k\left>\right<\psi_k|$, and $S\varphi_k=\psi_k$, $S^{-1}\psi_k=\varphi_k$, and $SH=H^\dagger S$ (when all the eigenvalues are real, which is not always the case, as it happens for $PT$-symmetric Hamiltonians in a $PT$-broken phase). This is a somewhat general settings for non self-adjoint Hamiltonians, which can be extended adding some extra technical assumptions to $\dim(\Hil)=\infty$.
		 \item We say that $X\in\B(\Hil)$ is a $\gamma$-symmetry if $[H,S^{-1}X]=0$. Notice that, if $H=H^\dagger$ and $\F_\varphi$ is an orthonormal basis, this means that $X$ commutes with $H$, since $S=S^{-1}=\1$. Notice also than $S$ is always a $\gamma$-symmetry, while $S^{-1}$, in general, is not. We have the following:  $X$ is a $\gamma$-symmetry if and only if any of the following statements, all equivalent, are satisfied: 1) $ [H^\dagger,X^\dagger S^{-1}]=0$; 2)  $H^\dagger X=XH$; 3)  $\delta_\gamma(X)=0$; 4) $\gamma^t(X)=X$. 
		 
		 This last result is particularly relevant since it implies that all $\gamma$-symmetries do not evolve in time.
		 \item If $X\in\B(\Hil)$ is $\gamma$-symmetry, and $Y\in\B(\Hil)$ is an operator commuting with $H$, then $XY$ is also a $\gamma$-symmetry.		
	\end{enumerate}
	
	We can now state the following result:
	
	\begin{prop}\label{prop3}
		If $(A,B;\varphi)$ saturates (\ref{23}), and if $A$ and $B$ are $\gamma$-symmetries, then $(\gamma^t(A),\gamma^t(B);\varphi)$ saturates (\ref{23}), $\forall t\geq0$.
	\end{prop}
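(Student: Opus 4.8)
The plan is to reduce the statement to a fact already recorded in the list of properties preceding the proposition, namely that $\gamma$-symmetries are exactly the fixed points of the $\gamma$-dynamics. Recall that in the characterization of a $\gamma$-symmetry $X$ it was asserted that $X$ is a $\gamma$-symmetry if and only if $\gamma^t(X)=X$ for all $t$ (statement 4) of that equivalence). Hence the first and essentially only substantial step is to invoke this equivalence separately for $A$ and for $B$: since both are assumed to be $\gamma$-symmetries, we obtain at once $\gamma^t(A)=A$ and $\gamma^t(B)=B$ for every $t\geq0$.

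Once this is in place, I would observe that, for each fixed $t$, the triple $(\gamma^t(A),\gamma^t(B);\varphi)$ is literally identical to the triple $(A,B;\varphi)$, because the test vector $\varphi$ is unchanged and the two operators are mapped to themselves. Consequently every quantity entering the saturation identity (\ref{214}) is preserved verbatim: the mean values $\langle A\rangle$, $\langle B\rangle$, $\langle A^\dagger\rangle$, $\langle A^\dagger B\rangle$ evaluated on $\varphi$ coincide with those computed for $\gamma^t(A)$ and $\gamma^t(B)$, and therefore so do the variances $\Delta(\gamma^t(A))=\Delta A$ and $\Delta(\gamma^t(B))=\Delta B$, as well as the centered correlation $\langle\hat A^\dagger\hat B\rangle$. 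Since $(A,B;\varphi)$ saturates (\ref{23}) by hypothesis, the equality in (\ref{214}) holds unchanged for $(\gamma^t(A),\gamma^t(B);\varphi)$, which is exactly the claim.

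I do not expect a genuine obstacle here: the whole weight of the proposition rests on the equivalence ``$X$ is a $\gamma$-symmetry $\then\gamma^t(X)=X$'', which is supplied by the results recalled before the statement, so that once it is granted the argument is a one-line substitution. The only point I would state carefully is that it is the \emph{simultaneous} invariance of both $A$ and $B$, together with the fixity of $\varphi$, that guarantees that each individual term of the saturation relation -- and not merely the product $\Delta A\,\Delta B$ -- is preserved; this is what ensures that a saturating triple is sent to a saturating triple, rather than merely to one still obeying the inequality. I would also remark that the restriction $t\geq0$ plays no essential role, the same reasoning applying for all real $t$.
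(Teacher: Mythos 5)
Your proof is correct and takes exactly the paper's route: the paper likewise disposes of the proposition by appealing to item 6 of the preceding list, namely that a $\gamma$-symmetry $X$ satisfies $\gamma^t(X)=X$ for all $t$, so that the triple $(\gamma^t(A),\gamma^t(B);\varphi)$ coincides with $(A,B;\varphi)$ and the saturation of (\ref{23}) is inherited trivially. Your explicit spelling-out of the substitution merely elaborates what the paper summarizes as ``there is not much to prove, indeed.''
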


In view of our comment 6. above, there is not much to prove, indeed. It is more interesting to notice that the inverse statement is not true. In other words, if  $(\gamma^t(A),\gamma^t(B);\varphi)$ saturates (\ref{23}), $\forall t\geq0$, this does not imply that $A$ and $B$ are $\gamma$-symmetries. In fact, in view of Proposition \ref{prop2}, we conclude that one of the following facts must be  satisfied for all $t$: (c1) $\varphi$ is an eigenstate of $\gamma^t(A)$, or (c2)  $\varphi$ is an eigenstate of $\gamma^t(B)$, or yet (c3)  $\varphi$ is an eigenstate of $\gamma^t(A)+\alpha\gamma^t(B)$, for some complex $\alpha$. If we now consider an operator $V_\varphi$ such that $V_\varphi\varphi=\varphi$, then it is easy to see that $\gamma^t(A)=AV_\varphi$ and $\gamma^t(B)=BV_\varphi$ saturate (\ref{23}), if $(A,B;\varphi)$ saturates (\ref{23}) at $t=0$. Hence $A$ and $B$ need not being invariant under $\gamma$-time evolution.

\vspace{2mm}

{\bf Remark:--} A class of operators like $V_\varphi$ can be easily constructed as follows: let us consider an orthonormal basis of $\Hil$, $\F_e=\{e_n, n\geq1\}$, such that $e_1=\varphi$. This is always possible. Let $c=\{c_n\}\in\l^2(\mathbb{N})$ be a complex-valued sequence, with $c_1=1$. We call $V_\varphi f=\sum_{n=1}^\infty c_n\langle e_n,f\rangle e_n$. It is clear that $V_\varphi\varphi=\varphi$, and that $\|V_\varphi\|\leq \|c\|_2$, the $l^2$-norm of the sequence $c$. Different choices of $c$ and $e_n$, $n\geq2$, give rise to different explicit forms of the operator $V_\varphi$.
	
	\vspace{2mm}
	
It would be interesting to check if the assumption that $(\gamma^t(A),\gamma^t(B);\psi)$ saturates (\ref{23}) for all possible normalized $\psi\in\Hil$ and for all $t\geq0$ implies that $A$ and $B$ are $\gamma$-symmetries. This is still not so clear, and it is work in progress.

A simple consequence of what we have discussed so far is that, if $A$ and $B$ are $\gamma$-symmetries, and if there exists $C\in\B(\Hil)$ such that $[C,H]=0$ and $(AC,BC;\varphi)$ saturate (\ref{23}), then  $(\gamma^t(AC),\gamma^t(BC);\varphi)$ saturate (\ref{23}) for all $t\geq0$. This is simply because both $AC$ and $BC$ are $\gamma$-symmetries, and because of Proposition \ref{prop3}.

\vspace{2mm}

{\bf Remark:--} In what we have discussed here, the effect of the dynamics is of the {\em Heisenberg-type}: the state of the system (i.e., here, the normalized vector of $\Hil$) is time-independent, while the observables are not. If we change representation, going to the Schr\"odinger representation, the opposite happens it is exactly the state that evolves in time. And this evolution can transform a pure state into some linear combination of states. This suggests that the role of mixed states could be analyzed, in our dynamical context. This is an interesting line of research, see for instance \cite{luo}, and we hope to be able to work on it soon.

\subsection{A different scalar product}\label{subsectIII1}

It is widely discussed in the literature that, in presence of non self-adjoint Hamiltonians, it could be useful to introduce in the Hilbert space $\Hil$ a different scalar product, which is such that many of the properties of a quantum system driven by a self-adjoint Hamiltonian are recovered. We refer to \cite{baghat2021}, and to references therein, for some details on this aspect, and for a particular construction of several scalar products useful to produce {\em exactly solvable} Hamiltonians\footnote{With this we mean Hamiltonians whose eigensystems can be found explicitly, and exactly.}.

In particular, as we have sketched before, given $H\neq H^\dagger$, we can introduce an operator $S$ satisfying $SH=H^\dagger S$. Here we are assuming, as often in this paper, that $S,H\in\B(\Hil)$. $S$ is positive and self-adjoint, it admits an unique positive, and self-adjoint, square root $S^{1/2}$, which admits bounded inverse: $S^{-1/2}\in\B(\Hil)$. Of course, this means that $S^{-1}\in\B(\Hil)$ as well. 

We can now use $S$ to define the following scalar product and its related norm
\be
\langle f,g\rangle_S=\langle S\,f,g\rangle, \qquad \|f\|_S^2=\langle f,f\rangle_S=\|S^{1/2}f\|^2,
	\label{33}\en
	where $f,g\in\Hil$. This is indeed a scalar product on $\Hil$ and we can introduce an adjoint $\sharp$ as follows:
	\be
	\langle X f,g\rangle_S=\langle f,X^\sharp g\rangle_S,
	\label{34}\en
	for $X$ bounded on $\Hil$ and $f,g\in\Hil$.

	Now we can extend (\ref{21}):
	\be
	(\Delta X)_S^2=\|(X-\langle X\rangle_S)\varphi\|_S^2=\|\tilde X\varphi\|_S^2
	\label{35}\en
	where $\langle X\rangle_S=\langle\varphi,X\varphi\rangle_S$ and $\tilde X=X-\langle X\rangle_S$. Applying  the Schwarz inequality for $\langle .,.\rangle_S$, we find that
	\be
	(\Delta A)_S(\Delta B)_S=\|\tilde A\varphi\|_S\|\tilde B\varphi\|_S\geq |\langle\tilde A\varphi,\tilde B\varphi\rangle_S|=|\langle\varphi,\tilde A^\sharp\tilde B\varphi\rangle_S|.
	\label{36}\en
	It is known, and easy to check, that if $A$ satisfies $SA=A^\dagger S$, then $A=A^\sharp$ and $\tilde A=\tilde A^\sharp$. This is what happens, in particular, when $A=H$. So the right-hand side of (\ref{36}) becomes $|\langle\varphi,\tilde A\tilde B\varphi\rangle_S|$. If also $B$ satisfies the same equality, $SB=B^\dagger S$, then we find the following alternative form of (\ref{23}) and (\ref{26}):
	\be
(	\Delta A)_S\, (\Delta B)_S\geq\left|\langle \hat A\hat B\rangle_S\right|=\frac{1}{2}\left|\langle[A,B]\rangle_S+\langle\{\tilde A^\dagger,\tilde B\}\rangle_S\right|\geq \frac{1}{2}\left|\langle[A,B]\rangle_S\right|,
	\label{37}\en
	repeating here similar computations.
	
	In deducing this inequality, we have assumed that both $A$ and $B$ satisfy a certain intertwining relation with $S$ which, as we have commented before, is a $\gamma$-symmetry. Operators of this kind have been recently called {\em good observables}, \cite{bhabani}. It is natural to ask if or when two such operators really exist. In fact, it is quite easy to describe a constructive approach. First of all, we identify $A=H$ since, in this way and with the construction summarized before, the identity $SH=H^\dagger S$ is automatic. Moreover, if we introduce the new operator $H_0=S^{1/2}HS^{-1/2}$, we can easily check that $H_0=H_0^\dagger$. This is possible if all the eigenvalues of $H$ are real, while it is not true if even a single eigenvalue has a non zero imaginary part, \cite{baghat2021}. Now, let us consider {\bf any} $B_0=B_0^\dagger\in\B(\Hil)$. We can check that $B=S^{-1/2}B_0S^{1/2}$ is a good observable, i.e. that $SB=B^\dagger S$. Moreover, since
	$
	[B,H]=S^{-1/2}[B_0,H_0]S^{1/2},
	$
	it follows that $B$ commutes with $H$ if and only if $B_0$ commutes with $H_0$. This can be interpreted as follows: any quantum mechanical system $\Sc_0$ described by a self adjoint Hamiltonian $H_0$, with a symmetry $B_0$, can be {\em deformed} maintaining some of the essential features of $\Sc_0$. The operator $S-\1$ is a measure of the {\em degree of deformation}.

	\subsection{The scalar product $\langle.,.\rangle_S$ and the non Hermitian position and momentum operators}\label{subsectIII2}
	
	As we have already commented in Example nr. 1, if we consider the two self-adjoint operators $x_0=\frac{c+c^\dagger}{\sqrt{2}}$ and $p_0=\frac{c-c^\dagger}{\sqrt{2}\,i}$ on $\ltwo$, and we take $\varphi=\Phi(z)$ in (\ref{21}), we find that $
	\Delta x_0\, \Delta p_0= \frac{1}{2}$.  
	
	In the last years, mainly in connection with pseudo-hermitian (and $PT$-symmetric) quantum mechanics, it has been understood that in some cases $c$ and $c^\dagger$ are better replaced by two different operators, $a$ and $b$, such that $b^\dagger$ is, in general, different from $a$, and that $[a,b]=\1$ is satisfied (in the sense of unbounded operators). This is because many physical systems introduced in the past 20 years can be rewritten in terms of this kind of operators, \cite{bagspringer}. $a$ and $b$, together with their adjoints, act as ladder operators and produce two biorthonormal families of vectors of $\Hil$, $\F_\varphi=\{\varphi_n, \,n\geq0\}$ and $\F_\psi=\{\psi_n, \,n\geq0\}$, which may, or may not, be bases. Still they are, in all the examples under control, complete in $\Hil$. In particular $a$ is a lowering operator for $\F_\varphi$, while $b^\dagger$ is a lowering operator for $\F_\psi$: $a\varphi_n=\sqrt{n}\,\varphi_{n-1}$, $b^\dagger\psi_n=\sqrt{n}\,\psi_{n-1}$, $n\geq0$, with the understanding that $\varphi_{-1}=\psi_{-1}=0$. Because of this, it is probably not surprising that we can introduce two {\em bi-coherent states} satisfying 
	\be
	a\varphi(z)=z\varphi(z), \qquad b^\dagger\psi(z)=z\psi(z),
	\label{38}\en
	$z\in\mathbb{C}$.  We refer to \cite{bagspringer} and references therein for many results on these {\em pseudo-bosonic operators}, and their related bi-coherent states. What is interesting for us, here, is to see what does it happen if we look for the (generalized) HUR for the non self-adjoint operators $X$ and $P$ which extend $x_0$ and $p_0$ replacing $c$ with $a$ and $c^\dagger$ with $b$:
	$$
	X=\frac{a+b}{\sqrt{2}}, \qquad P=\frac{a-b}{\sqrt{2}\,i}.
	$$
	It is clear that, in general, $X\neq X^\dagger$ and $P\neq P^\dagger$. Still, $[X,P]=i\1$ (again, in the sense of unbounded operators). It is possible to show that, while $
	\Delta X\, \Delta P\neq \frac{1}{2}$, we have
	\be
	(\Delta X)_S\, (\Delta P)_S= \frac{1}{2}|\langle[X,P]\rangle_S|=\frac{1}{2},
	\label{39}\en
	where the mean values are computed on $\varphi(z)$. We prove this equality in the simplest situation, that is for {\em regular} pseudo-bosons, \cite{bagspringer}. In this case we have a bounded and invertible operator $R$, with bounded inverse, such that $a$ and $b$ are connected to $c$ and $c^\dagger$ as follows:
	\be
	a=RcR^{-1}, \qquad b=Rc^\dagger R^{-1}.
	\label{310}\en
	
	\vspace{2mm}
	
	{\bf Remark:--} This is an useful shorthand notation which should be understood in the sense of unbounded operators, introducing, for instance, the algebra $\Lc^\dagger(\D)$ as we proposed before, \cite{aitbook,schu}. Alternatively,  see \cite{bagspringer}, we can suppose that a set $\D$, dense in $\Hil$ and stable under the action of $R,R^{-1},c,c^\dagger, a, b,....$ exists, and replace $a=RcR^{-1}$ with 	$af=RcR^{-1}f$, for all $f\in\Hil$. Many examples do indeed satisfy this apparently strong assumption, \cite{bagspringer}.
	
	\vspace{2mm}
	
	In this situation it is possible to find that $\varphi(x)$, $\psi(z)$ and $\Phi(z)$ are all connected by $R$:
	\be
	\varphi(z)=R\Phi(z), \qquad \psi(z)=(R^{-1})^\dagger \Phi(z), \quad \Rightarrow \quad \varphi(z)=RR^\dagger\psi(z).
	\label{311}\en  
	Similar equalities relate the vectors in $\F_\varphi$ with those in $\F_\psi$. For instance, $\varphi_n=RR^\dagger\psi_n$. Hence, going back to point 5. of our list of results at the beginning of Section \ref{sectIII}, we can identify $S^{-1}$ with $RR^\dagger$. Now we have
	$$
	\langle[X,P]\rangle_S=\langle S\varphi(z),[X,P]\varphi(z)\rangle=i\langle\psi(z),\varphi(z)\rangle=i,
	$$
	using for instance (\ref{311}) and the normalization of $\Phi(z)$. Also
	$$
	\langle X\rangle_S=\langle S\varphi(z),X\varphi(z)\rangle=\frac{1}{\sqrt{2}}\langle \psi(z),(a+b)\varphi(z)\rangle=\frac{z+\overline z}{\sqrt{2}},
	$$
	and
	$$
	\langle P\rangle_S=\langle S\varphi(z),P\varphi(z)\rangle=\frac{1}{\sqrt{2}\,i}\langle \psi(z),(a-b)\varphi(z)\rangle=\frac{z-\overline z}{\sqrt{2}\,i},
	$$
	which are both real.
	Using further formulas (\ref{310}) and (\ref{311}), we also find that
	$$
	\langle X\varphi(z),X\varphi(z)\rangle_S=\frac{1}{2}\left(z^2+\overline{z}^2+2|z|^2+1\right),
	$$
	and
		$$
	\langle P\varphi(z),P\varphi(z)\rangle_S=-\,\frac{1}{2}\left(z^2+\overline{z}^2-2|z|^2-1\right).
	$$
	Summarizing then
	$$
	(\Delta X)_S^2=\langle X\varphi(z),X\varphi(z)\rangle_S-\langle X\rangle_S^2=\frac{1}{2}, \qquad
	(\Delta P)_S^2=\langle P\varphi(z),P\varphi(z)\rangle_S-\langle P\rangle_S^2=\frac{1}{2},
	$$
	so that (\ref{39}) follows. Hence one might be tempted to conclude that, in situation like the one described here, the {\em natural} scalar product to consider on $\Hil$ is not $\langle.,.\rangle$, but rather $\langle.,.\rangle_S$. This is, indeed, very much debated in the literature. Indeed, while the above computation, and the validity of (\ref{39}),  goes in this direction, it is easy to see that if we replace $X$ and $P$ with $X^\dagger$ and $P^\dagger$, which also satisfy $[X^\dagger,P^\dagger]=i\1$,  to recover something like equality (\ref{39}) we have to replace $\varphi(z)$ with $\psi(z)$, and $S$ with $S^{-1}$. In other words: for these other pair of operators we have to consider a third scalar product and another bi-coherent state, if we are interested in saturating the inequality. This is unpleasant, since it suggests that, when one works with non self-adjoint operators, there could be the need of using simultaneously different scalar products. This is not really what one would like to have, when describing some concrete physical system $\Sc$: the mathematical settings needed for the analysis of $\Sc$, and in particular the Hilbert space and its scalar product, should be fixed from the very beginning, and should be not changed as a consequence of what observables we are interested in. But, again, this is also an aspect which is debated, mainly by the community of people working with non self-adjoint operators.

	There is also another result that shows how the choice of the scalar product, but also of the definition of an Heisenberg-like dynamics, is indeed not so clear. Suppose for instance that our physical system is described by the manifestly non Hermitian Hamiltonian $H=\omega ba$, $\omega\in\mathbb{R}$. This is not so rare, indeed, \cite{bagabook}. Hence, \cite{bagspringer}, $e^{-iHt}\varphi(z)=\varphi(ze^{-i\omega t})$, and $e^{iH^\dagger t}\psi(z)=\psi(ze^{i\omega t})$: bi-coherent states are mapped into bi-coherent states. However, in general,
	$$
	\langle\varphi(z),\gamma^t(A)\varphi(z)\rangle_S\neq 	\langle\varphi(ze^{-i\omega t}),A\varphi(ze^{-i\omega t})\rangle_S,
	$$
	while the two sides of this formula coincide if $R=\1$, i.e. if $H=\omega c^\dagger c=H^\dagger$, in particular. If we rather introduce $\alpha^t(A)=e^{i H t}A e^{-iH t}$, despite the fact that $e^{\pm iHt}$ is not unitary, $\alpha^t$ satisfies the following important identity $
	\alpha^t(AB)=\alpha^t(A)\alpha^t(B),
	$
	as well as
	$
	\langle\varphi(z),\alpha^t(A)\varphi(z)\rangle_S=	\langle\varphi(ze^{-i\omega t}),A\varphi(ze^{-i\omega t})\rangle_S,
	$
	for all operators $A$ and $B$. So the conclusion is that, in agreement with what already discussed in \cite{bagspringer,bagaop20151,bagaop20152} and references therein, it is still not completely clear what should be called the {\em Heisenberg-like dynamics} for non Hermitian Hamiltonians. There are pros and cons in any choice one can make, and there is no special reason, in our opinion, to choose one rather than another approach.

	\section{Conclusions and perspectives}
	
	\label{sectconcl}
	
	In this paper we have considered some aspects of the HUR  mostly from the point of view of non self-adjoint operators. Some equivalence results, like those in Lemma \ref{lemma1}, and some refinements of the inequality, have been deduced. In particular we have discussed how, also for self-adjoint operators, an extra term should be added in the HUR to cover even simple situations. Also, conditions for saturating the inequalities have been discussed, in the same spirit as in \cite{hall}. 
	
	It is maybe interesting to stress that our results might appear close to others, even recent, appeared in the literature, as those, to cite quite a recent paper, in \cite{bhabani}, but the approach and the mathematical settings are indeed rather different. Apart the topic of the papers, which is the same (HUR for non self-adjoint operators), the content of our paper is really different from that in \cite{bhabani}, where a two-dimensional (and therefore bounded) model is considered. Here, recalling that most of the relevant operators in quantum mechanics live in infinite-dimensional Hilbert spaces (and are often  unbounded), we discuss a setup which works well in these cases. Moreover, here we have considered the relation between HUR and $\gamma$-dynamics, and we have connected the notion of $\gamma$-symmetries, which was introduced in \cite{bag2022}, with the notion of {\em good observables}, \cite{bhabani}.  These aspects, together with other connections with symmetries for non self-adjoint operators, still needs to be fully understood. These are part of our future plans. It is particularly interesting to achieve a better understanding of the role of the Heisenberg dynamics in situations like ours, where the Hamiltonian is not self-adjoint. There are indeed several possibilities to adopt, one different from the other, as briefly sketched in the last part of Section \ref{sectIII}. This is also an interesting aspect to analyze in details, for its consequences in many different aspects of, e.g., pseudo-hermitian quantum mechanics, from dynamics to transition probabilities.

	\section*{Acknowledgements}
	
The author acknowledges partial financial support from Palermo University (via FFR2021 "Bagarello") and from G.N.F.M. of the INdAM. 
	
	\section*{Funding statement}
	
	This work received no financial support.

	
\end{document}